\documentclass[11pt,a4paper]{article}
\usepackage[utf8]{inputenc}
\usepackage[T1]{fontenc}
\usepackage{lmodern}
\usepackage{microtype}
\usepackage{amsmath,amssymb,amsthm}
\usepackage[a4paper,margin=25mm]{geometry}
\usepackage{booktabs,array}
\usepackage{enumitem}
\usepackage[hidelinks]{hyperref}
\DeclareUrlCommand\code{\urlstyle{tt}}

\newcommand{\ALG}{\mathrm{ALG}}
\newcommand{\OPT}{\mathrm{OPT}}
\newcommand{\val}{\mathrm{val}}
\newcommand{\vol}{\mathrm{vol}}
\newcommand{\digits}[1]{{\small\texttt{#1}}}
\newcommand{\dgrp}{\hspace{0.17em plus 0.2em}}
\newcommand{\dslash}{\hspace{0.3em plus 0.3em}/\hspace{0.3em plus 0.3em}}
\theoremstyle{plain}
\newtheorem{theorem}{Theorem}[section]
\newtheorem{lemma}[theorem]{Lemma}
\newtheorem{corollary}[theorem]{Corollary}
\newtheorem*{lemmaW}{Lemma W}

\setlist{itemsep=2pt,topsep=4pt}
\title{Tightening the Online Bin Packing Gap from Both Sides, and the Limits of the Current Methods}

\author{Karyamsetty Martin\\[2pt] \normalsize\texttt{karyamsettymartin@gmail.com}}

\date{October 2026}

\begin{document}
\maketitle

\begin{abstract}
\noindent The asymptotic competitive ratio of online bin packing has been bracketed since 2018 by the lower bound $1.54278$ of Balogh, Békési, Dósa, Epstein and Levin and the upper bound $1.57829$ of their Advanced Harmonic algorithm. We revisit both sides with exact rational computation and prove what the computations say. On the upper bound, we rebuild the certification of Advanced Harmonic as an exact optimization: the published analysis only exhibits feasible dual values and shares over-assigned weight tables across scenario groups, and optimizing the duals exactly already certifies a ratio strictly below the published one with the published parameters. Re-tuning the algorithm's proportion parameters then gives a certified asymptotic competitive ratio of $1.5782147437$ (an exact rational), independently re-verified on all $402$ scenarios; to our knowledge this is the first movement of the upper bound since 2018. On the lower bound, we prove that the 2018 adversary construction is rigid: extending it by any finite number of new batches of ordinary size (outside the window of sizes $s$ with $2s>1$ and rational part at most $0.64819$), placed anywhere, with counts depending on the observed statistics, adaptive or not, never improves the bound at any truncation level, and the extended family has exactly the paper's limit. The load-bearing fact is that the construction is volume-tight at every stop. A single window-sized batch added to the truncation-$3$ tree is shown, by finitely many machine-verified certificates, to leave its value below the published limit. A computational map covers the remaining natural extensions: unions of the main structural variants are bracketed exactly inside the paper's own value bracket; multi-batch two-statistic adversaries and three natural families of adaptive conditions that the paper's own separation theorem legalizes but never uses are screened just above the paper's value, and none passes, for identifiable reasons. On the algorithm side we show that the marking, bonus-item and red-sand mechanisms of Extreme Harmonic bind nowhere in Advanced Harmonic, whose bottleneck is a two-medium-item bin that no known mechanism targets, and that the tuned parameters are a local optimum of the natural move set up to the resolution of the screen. We record two corrections to commonly cited numbers. The picture that emerges is that both bounds now sit at the edge of what their frameworks can express.
\end{abstract}

\section{Introduction}

In online bin packing, items of sizes in $(0,1]$ arrive one at a time and must be irrevocably assigned to unit-capacity bins; the goal is to use few bins. The asymptotic competitive ratio of an algorithm $\ALG$ is the infimum of $R$ such that $\ALG(\sigma) \le R\cdot\OPT(\sigma) + O(1)$ for every input $\sigma$. The best known bounds on the optimal ratio are a lower bound of $1.5427809\ldots$, proved by Balogh, Békési, Dósa, Epstein and Levin \cite{r1} with a branching, partly adaptive adversary, and an upper bound of $1.57829$, achieved by their Advanced Harmonic algorithm \cite{r2}. Both are from 2018; the gap has not moved since.

This paper reports what happens when both sides are attacked with exact machine computation, and then proves the structural statements that the computations reveal. Our contributions are the following.

\begin{enumerate}
\item \textbf{An improved upper bound, exactly certified.} We rebuild the Advanced Harmonic analysis as an exact per-scenario optimization over its dual variables. With the published parameters this certifies $R = 1.5782893074\ldots$ (an exact rational, Theorem 3.1), strictly below the published $1.5782895555\ldots$; the slack comes from the paper's use of feasibility-only duals and of weight tables shared across groups of scenarios. Re-tuning the algorithm's proportion parameters within the constraints of the analysis then gives a certified ratio of $1.5782147437\ldots$ (Theorem 3.2), re-verified by an independently written checker on every scenario with zero failures. The tuned parameters are a local optimum of the natural move set up to the screen's resolution of $2\times 10^{-9}$ (Section 3.4).
\item \textbf{Rigidity theorems for the lower-bound construction.} We prove that the 2018 construction is \emph{volume-tight} at every stop (Lemma 4.1), and derive from it that no extension by finitely many new batches of ordinary size, placed anywhere with counts depending on the observed statistics, adaptive or not, improves the bound at any truncation level (Theorem 4.2); the extended family has exactly the paper's limit (Corollary 4.3), so any improvement by added batches must use an item in the window, the sizes $s$ with $2s>1$ lexicographically and $s_r\le 1/r^*_\infty$ (Corollary 4.4). For a single window batch added to $T^*_3$ we give finitely many machine-verified certificates placing the value below $r^*_\infty$ (Theorem 4.5).
\item \textbf{A computational map of the remaining extensions.} Unions of the main structural variants of the construction are bracketed exactly inside the paper's own value bracket at $t=3$. Multi-batch adversaries with a second adaptive statistic, searched over compositions of moves ($\approx 850$ two-statistic trees), and three natural families of adaptive conditions that the paper's separation theorem allows but never uses, are screened by an exact feasibility check $2.2\times 10^{-7}$ above the paper's value, and none passes. We identify the mechanism in each case (Section 5).
\item \textbf{Negative results with mechanisms on the algorithm side.} The marking constraint of Extreme Harmonic, transplanted into Advanced Harmonic's analysis as a sound distribution LP, has zero dual value in every scenario tested; the bonus-item and red-sand mechanisms target the same critical bins. Advanced Harmonic's binding configurations are instead \emph{two-medium} bins, which no known mechanism addresses (Section 6).
\item \textbf{Two corrections.} The fixed-fraction floor of Heydrich and van Stee is $1.5762$ in the full version of their paper ($1.5766$ appears only in the conference abstract), and whether that floor formally covers Advanced Harmonic is not established anywhere in print (Section 7).
\end{enumerate}

We are explicit about the character of each result. The upper-bound improvement is a certificate- and parameter-level improvement inside a fixed algorithmic framework; it moves a number that had not moved in eight years, but it is not a new mechanism. The lower-bound results are structural theorems about the 2018 construction, and, we believe, the first statements of any kind about the optimality of an online bin packing lower-bound construction within its method (Section 7). Together they locate both bounds at the boundary of their frameworks and say precisely what a further improvement must contain.

\section{Preliminaries}

\subsection{Adversary trees and the behaviour LP}

Following \cite{r1}, a lower-bound instance is a rooted tree of \emph{batches}. Each batch presents $c\cdot N$ items of one size class, where $N\to\infty$ is a scale parameter; the adversary may \emph{stop} after any batch $v$, at which point $\ALG_v$ is compared with $\OPT_v$, and at a node with several children exactly one continuation is presented (branching). A batch may be \emph{adaptive}: by the separation theorem of \cite[Theorem 1]{r1}, the adversary can choose item sizes from an infinitesimal interval so that the items satisfying a binary condition on how they were packed are all smaller, by any prescribed factor, than the items violating it. The paper instantiates the condition as ``the item is packed into a bin that already contains at least one item'', so the bin-opening items form the large class; the fraction $n$ of items in the large class is a \emph{statistic} observed by the adversary, and later batch counts may be affine in $n$.

For a fixed target $\lambda$, the algorithm's possible behaviours form a linear program over \emph{virtual-bin patterns}: a pattern fixes a bin's contents on every continuation simultaneously, feasibility being checked per root-to-leaf path in the $\varepsilon\to 0^+$ limit. Coverage constraints, the statistic identity, and one ratio row $\ALG_v \le \lambda\cdot\OPT_v$ per stop define the \emph{behaviour LP}; the value of the tree is
\[\val(T) = \inf\{\lambda : \text{the behaviour LP at } \lambda \text{ is feasible}\}.\]
$\OPT_v$ is the fractional covering optimum, which differs from the true optimum by an additive constant. As a function of the statistic vector, $\OPT_v$ is convex piecewise-linear; on each simplex of an adaptively refined triangulation it is bounded above by the linear interpolant of its vertex values, constructively (the convex combination of the vertex packings covers the interpolated demands). Infeasibility on every simplex, certified by a Farkas vector, proves $\val(T) > \lambda$.

The 2018 construction $T^*_t$ consists of a prefix chain of $t-1$ batches at scales $1/(6\cdot 7^{t-1}), \ldots, 1/49$, an adaptive batch $A$ of items near $1/7$, and three continuations under $A$: a single batch of items above $1/2$; a chain of items near $1/3$ then near $1/2$; and a chain of two complement batches with sizes $(5-2\varepsilon-3\gamma)/14$ and $(7+\gamma)/14$ whose counts are affine in the statistic. Its value is $r^*(t)$, with $r^*(3) = 1.5427541761\ldots$, $r^*(4) = 1.5427803609\ldots$, increasing to $r^*_\infty = (1363-\sqrt{1387369})/120 = 1.5427809064\ldots$. Sizes are represented as lex-infinitesimal numbers $s = s_r + a\varepsilon + b\gamma + \cdots$ with rational coefficients, and every comparison is understood for all sufficiently small $\varepsilon \gg \gamma \gg \cdots$, exactly as in the $\varepsilon$-ledger of \cite{r1}.

\subsection{Advanced Harmonic and its scenario analysis}

Advanced Harmonic \cite{r2} classifies items into $M+1 = 212$ size classes with boundaries $t_1 > t_2 > \cdots > t_{M}$ and a tiny class below $t_M$. Items of class $j$ are placed into \emph{containers} of several types $i$ in target proportions $\alpha_{ij}$, with $\alpha_{ij}\in[0,1]$, $\sum_i \alpha_{ij} = 1$ for every $2\le j\le M$, and strictly positive proportions for the tiny class. Containers are positive or negative and are combined into bins by fixed compatibility rules, with a postponement mechanism for certain decisions. The analysis fixes a \emph{scenario} $(x,y]$ determined by the sizes of the last large items, assigns each item a weight that is affine in three dual parameters $(u,v,w)$ subject to four linear constraints, and bounds the ratio in that scenario by the maximum total weight $R_{x,y}(u,v,w)$ of any bin of the optimal solution, computed by two exact knapsack problems (one with, one without a huge item) over the weights, with tiny items entering as ``sand'' of a fixed density. The asymptotic competitive ratio is bounded by $\max_{(x,y]} \min_{(u,v,w)} R_{x,y}(u,v,w)$. The paper states that its analysis has $412$ scenarios; our enumeration has $402$. The discrepancy is in the paper's prose, not in the scenario set: the paper's own tables list $34$ non-large and $164$ large-threshold scenario rows besides the $160$ uniform ones, with the $44$ scenarios below $1/6$ untabulated, and the authors' published supplement of weight functions contains exactly $402$ tables, one per scenario, whose intervals coincide with our enumeration interval for interval (checked by script against the paper's source). Every certificate below is therefore for the scenario set of the published analysis. The validity theorem of \cite{r2} is parametric: it holds for every proportion table satisfying the constraints above and for every feasible $(u,v,w)$.

\section{The upper bound}

\subsection{Exact certification as an optimization}

We implemented the full analysis in exact rational arithmetic: all $212$ class boundaries, the complete proportion table (with the sum constraints asserted), the scenario enumeration, the affine weights, and exact branch-and-bound knapsacks with sand. For each scenario, instead of exhibiting feasible dual values, we solve
\[\min\{R : (u,v,w) \text{ feasible},\ \text{weight}(p) \le R \text{ for every pattern } p\}\]
by a cutting-plane method whose pricing step is the exact knapsack itself: a violated pattern is separated, added as a cut, and the LP re-solved, until the knapsack maximum at the current $(u,v,w)$ equals the LP value. All LPs are solved exactly over the rationals.

We validated the implementation against \cite{r2} on six reference scenarios, reproducing the published per-scenario bounds to $10$--$11$ digits, including the exact binding fraction \digits{10\dgrp{}060\dgrp{}574\dgrp{}276\dgrp{}093\dgrp{}395\dgrp{}247\dslash{}6\dgrp{}374\dgrp{}352\dgrp{}691\dgrp{}333\dgrp{}693\dgrp{}440}, and by reproducing two complete published weight tables entry by entry. Two facts learned in the process explain the slack in the published certificates: the published tables use one over-assigned weight function shared by a whole group of scenarios, tight only at the worst member of the group; and the published duals are feasible values found by search rather than optima.

\begin{theorem}[published parameters, optimal duals]
With the proportion table of \cite{r2}, the asymptotic competitive ratio of Advanced Harmonic is at most
\[\frac{22981899117912049614283}{14561271504392687714304} = 1.5782893074\ldots,\]
strictly below the published certificate \digits{10\dgrp{}060\dgrp{}574\dgrp{}276\dgrp{}093\dgrp{}395\dgrp{}247\dslash{}6\dgrp{}374\dgrp{}352\dgrp{}691\dgrp{}333\dgrp{}693\dgrp{}440} $= 1.5782895555\ldots$ (the commonly cited $1.57828956$ is a round-up).
\end{theorem}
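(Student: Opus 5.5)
The plan rests on the parametric validity theorem of \cite{r2} recalled in Section~2.2. For the published proportion table, and for every scenario $(x,y]$ and every feasible triple $(u,v,w)$, the ratio in that scenario is at most $R_{x,y}(u,v,w)$. The asymptotic competitive ratio is then at most $\max_{(x,y]} R_{x,y}(u_{x,y},v_{x,y},w_{x,y})$ for \emph{any} choice of feasible duals per scenario. So the proof reduces to three finite tasks. First, we fix the scenario set to be the $402$ scenarios of the published analysis, which coincide interval for interval with the supplement's tables. Second, for each scenario we exhibit exact rational duals $(u,v,w)$ satisfying the four linear constraints. Third, we certify exactly that the maximum bin weight under those duals is at most the stated fraction $R^\ast = 22981899117912049614283/14561271504392687714304$, with equality attained in at least one scenario.

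The duals are produced by the cutting-plane method of Section~3.1. For each scenario we solve the LP over $(u,v,w,R)$ with the current pattern cuts, price with the exact knapsack, and add the most violated pattern until the knapsack maximum equals the LP value. Correctness of the certificate does not depend on the LP solver. We only need the final duals, whose feasibility is checked by direct substitution into the four constraints, together with an exact evaluation of both knapsacks, with and without a huge item and with sand at its fixed density, at those duals. The value of each branch-and-bound knapsack is a rigorous upper bound on the maximum bin weight, because pruning uses exact rational relaxation bounds. Thus for each scenario the bound $R_{x,y}\le R^\ast$ follows from a finite exact computation. Taking the maximum over the $402$ scenarios gives the theorem. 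Exhibiting the binding scenario and its optimal LP solution shows that $R^\ast$ is the exact value of $\max\min$, though only the upper bound is needed.

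The comparison with the published number is a single exact rational comparison: we cross-multiply the two fractions and compare the integers. The remark that $1.57828956$ is a round-up follows by expanding the published fraction to eleven digits. The slack is explained, though the proof does not need the explanation, by the two sources identified in Section~3.1: shared over-assigned weight tables and feasibility-only duals. Each of these can only increase the per-scenario value relative to our optimized duals.

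The main obstacle is ensuring that the exact implementation faithfully encodes the analysis of \cite{r2}. The validity theorem only applies if the weights, the scenario boundaries, the sand density and the pattern feasibility match the published definitions exactly. I would handle this by the validation already described: reproducing the six reference scenarios and the published binding fraction, reproducing two full weight tables entry by entry, asserting the proportion-sum constraints, and re-running all $402$ certificates with an independently written checker that shares no code with the optimizer.
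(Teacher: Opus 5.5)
Your proposal is correct and follows essentially the same route as the paper. Both rely on the parametric validity theorem and compute per-scenario optimal duals exactly by cutting planes, with the exact sand-aware knapsack as the pricing oracle, over the $402$ published scenarios; both certify the bound by re-checking the stored duals against the claimed $R$ with an independent exact verifier, and make the comparison to the published fraction a single exact rational comparison (the paper's verifier uses an exhaustive density-ordered knapsack with no maximality assumption rather than trusting branch-and-bound pruning, but that is a robustness choice, not a different argument).
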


The improvement is $2.48\times 10^{-7}$; it is small because the published analysis was already nearly tight at its binding scenarios. The binding scenarios at this point are $(533/1600, 1/3]$, $(3/7, 1/2]$ and $(49/320, 1/6]$, followed by a plateau of scenarios in $(0,1/6]$ at $1.5782015\ldots$.

\subsection{Re-tuning the proportions}

Since the validity theorem is parametric in the proportion table, the table can be optimized against the exact certificates. We ran a coordinate descent over a move set of $40$ proportion shifts (mass moved between two container types of one class, for the classes whose items appear in the binding patterns, a shift applied uniformly to the $160$ classes between $1/6$ and $1/3$, and opening a previously unused container type for the class $(3/7, 1/2]$), evaluating each candidate by the exact per-scenario optimization on a watchlist of the $47$ hardest scenarios, and periodically on all $402$. Descent stalls when the three binding scenarios are equalized: after $11$ shifts they balance at $1.5782206488\ldots$, and after $16$ shifts at the value below, where $(3/7,1/2]$, $(533/1600,1/3]$ and $(49/320,1/6]$ agree to seven decimal places, the signature of a genuine optimum of the move set.

\begin{theorem}[tuned parameters]
There is a proportion table satisfying all constraints of the analysis of \cite{r2} (all rows sum to one, all entries nonnegative, all tiny-class entries strictly positive; checked exactly) for which the asymptotic competitive ratio of Advanced Harmonic is at most

\noindent\digits{4\dgrp{}352\dgrp{}511\dgrp{}675\dgrp{}791\dgrp{}652\dgrp{}445\dgrp{}491\dgrp{}291\dgrp{}971\dgrp{}161\dgrp{}065\dgrp{}859\dgrp{}763\dgrp{}057\dgrp{}971\dgrp{}646\dgrp{}201\dgrp{}076\dgrp{}221\dgrp{}224\dgrp{}375\dgrp{}333\dgrp{}444\dgrp{}301\dslash{}2\dgrp{}757\dgrp{}870\dgrp{}367\dgrp{}885\dgrp{}430\dgrp{}623\dgrp{}246\dgrp{}395\dgrp{}267\dgrp{}379\dgrp{}223\dgrp{}653\dgrp{}471\dgrp{}665\dgrp{}545\dgrp{}153\dgrp{}623\dgrp{}399\dgrp{}543\dgrp{}807\dgrp{}051\dgrp{}482\dgrp{}071\dgrp{}748}

\[= 1.5782147437\ldots\]
\end{theorem}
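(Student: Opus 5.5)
The proof is a computer-assisted certificate resting on the parametric validity theorem of \cite{r2}, which holds for every proportion table satisfying the row-sum, nonnegativity and tiny-class positivity constraints and for every feasible $(u,v,w)$. It therefore suffices to do three things. First, exhibit one explicit table, namely the output of the $16$-shift coordinate descent of Section 3.2. Second, check exactly that it satisfies those constraints. Third, for each of the $402$ scenarios of the published analysis, exhibit a feasible dual triple $(u,v,w)$ with $R_{x,y}(u,v,w)$ at most the stated rational. The ratio bound $\max_{(x,y]}\min_{(u,v,w)}R_{x,y}$ then gives the theorem.

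The steps are as follows. (i) Write the tuned table as exact rationals: the published table plus the $16$ recorded shifts. Verify in rational arithmetic that every row $2\le j\le M$ sums to $1$, that all entries lie in $[0,1]$, and that all tiny-class entries are strictly positive. (ii) Recompute, for this table, all class-dependent quantities entering the weights: the affine weight coefficients and the sand density. The scenario set itself depends only on the class boundaries, which are unchanged, so it is the same $402$-element set that was matched interval by interval against the published supplement. (iii) For every scenario, run the exact cutting-plane optimization of Section 3.1. Record the final dual triple and check the four linear feasibility constraints exactly. Then run the exact branch-and-bound knapsacks, with and without a huge item and with sand, at that triple, and confirm that the maximum bin weight is at most the claimed fraction. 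The claimed bound is the maximum of these certified values, attained at the balanced scenarios $(3/7,1/2]$, $(533/1600,1/3]$ and $(49/320,1/6]$. (iv) Repeat (iii) with the independently written checker, which takes only the table and the $402$ dual triples as input and re-derives weights and knapsack maxima from scratch.

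The certificate is only the final knapsack evaluation at a fixed rational dual point. The cutting-plane search need not be trusted, because its output is re-verified, and LP optimality is irrelevant for the upper bound.

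The main obstacle is soundness of the knapsack step, not the optimization. The branch-and-bound must genuinely maximize over all patterns, including sand, with correct handling of open versus closed class endpoints and of the scenario-dependent weight of items in $(x,y]$. A pruning bug there would silently understate $R$. I would address this in three ways. First, confirm that the implementation reproduces the published per-scenario values and two full published weight tables, as already done for Theorem 3.1. Second, make the independent checker use a different, simpler enumeration with its own upper bound for pruning. Third, check that both programs agree exactly on every scenario. A secondary risk is that the parametric validity theorem implicitly needs structural assumptions on the table beyond the listed constraints; for example, the newly opened container type for class $(3/7,1/2]$ must be admissible under the compatibility rules. I would check those rules against the proof in \cite{r2} before relying on it.
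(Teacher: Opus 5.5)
Your proposal is correct and follows essentially the paper's own argument. The paper invokes the parametric validity theorem of \cite{r2}, exhibits the $16$-shift table with its constraints checked exactly, obtains per-scenario duals by exact cutting-plane optimization over all $402$ scenarios, and has an independently written verifier re-derive the weights and re-check each stored $(u,v,w)$ against the claimed $R$ with an exhaustive density-ordered knapsack (Section 3.3). Two minor precisions. First, the three balanced scenarios agree only to seven decimal places, so the stated rational is the maximum of their slightly different certified values, not a common value. Second, the new container type for $(3/7,1/2]$ that you flag is not among the classes the paper lists as changed in the final table, so that admissibility worry does not arise here.
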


The improvement over the 2018 certificate is $7.48\times 10^{-5}$, about $0.2\%$ of the current gap width. The table differs from the published one in $15$ classes (the class of items just below $1/3$; the classes near $1/6$, $1/7$, $1/8$, $1/10$ and $1/11$; and several small classes) and in the tiny class.

\begin{table}[htbp]
\centering\small
\caption{Progression of certified upper bounds (all exact rationals; all re-verified).}
\begin{tabular}{l r l}
\toprule
Step & Certified ratio & Change \\
\midrule
Published certificate \cite{r2} & 1.5782895555 & --- \\
Published parameters, optimal duals (Thm 3.1) & 1.5782893074 & exact dual optimization \\
One proportion balanced & 1.5782830233 & first shift \\
10 shifts & 1.5782238110 & descent \\
11 shifts (three scenarios balanced) & 1.5782206488 & descent, local optimum of 20 moves \\
16 shifts (Thm 3.2) & 1.5782147437 & descent, local optimum of 40 moves \\
\bottomrule
\end{tabular}
\end{table}

\subsection{Verification}

Every certified value in Table 1 was obtained by a full sweep of all $402$ scenarios and then re-checked by an independently written verifier that re-derives the weights from the parameters with fresh arithmetic and re-examines every pattern by an exhaustive density-ordered knapsack that makes no maximality assumption, comparing the stored $(u,v,w)$ against the claimed $R$. It reported zero failures for every value. The certificates (per-scenario duals and binding patterns) are stored with the code (Appendix B).

\subsection{Local optimality of the tuned table}

To test whether the descent optimum was an artifact of slow evaluation, we built an exact feasibility \emph{screen}: for a fixed target $R$ the same cutting plane decides whether some feasible $(u,v,w)$ has every pattern weight at most $R$, with two sound early exits (the LP value over any subset of cuts is a lower bound on the optimum, so exceeding $R$ certifies infeasibility; an iterate whose exact knapsack maxima are all at most $R$ certifies feasibility by itself). The screen was validated against the exact optimizer on the three binding scenarios and five random scenarios at $R\pm 10^{-9}$ ($11/11$ agree). A randomized walk over the $40$-move set from the tuned table, accepting a candidate only if all $36$ watchlist scenarios pass the screen at the incumbent minus $\delta$, evaluated $327$ candidates with $\delta$ ratcheting from $2\times 10^{-6}$ down to $2\times 10^{-9}$ and accepted none; every rejection occurred at $(3/7,1/2]$. No single move of the set therefore improves the watch maximum by $2\times 10^{-9}$ or more at any step size tried. That is the precise statement: resuming coordinate descent from the tuned table accepts three moves, each improving by less than that margin, and reaches $1.5782147417$ on the watch list, a value not carried through the full sweep and not claimed. A separate boundary-insertion experiment (adding a class boundary with the paper's bulk proportions) was uniformly worse ($1.58$--$1.61$).

\section{Rigidity of the lower-bound construction}

\subsection{Two ingredients}

Throughout, $T = T^*_t$ for a fixed $t\ge 3$, and $\lambda$ is a rational with $r^*(t) < \lambda < r^*_\infty$, which exists because $r^*(t) < r^*_\infty$ strictly. All statements of this section are about the behaviour LP of Section 2.1, that is, in the $\varepsilon\to 0^+$ limit: a behaviour is a vector of pattern masses per $N$ items, $\OPT_v$ is the fractional covering optimum over lex-feasible patterns, and the volume $\vol(I)$ of a multiset is the sum of its counts times the rational parts of its sizes. Two facts hold exactly there: $\OPT(I)\ge\vol(I)$, because every lex-feasible pattern has rational load at most one, and $\OPT(I\cup J)\ge\OPT(I)$.

\begin{lemmaW}[witness]
For every $\lambda > r^*(t)$ there is a behaviour $x^*$ for $T^*_t$, with its statistic fixed at some $n^*$, such that $\ALG_v(x^*)\le \lambda\cdot\OPT_v$ at every stop $v$.
\end{lemmaW}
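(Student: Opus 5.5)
The plan is to read Lemma W directly off the definition of $\val(T)$ in Section 2.1, together with the fact that the behaviour LP only gets easier to satisfy as $\lambda$ grows. Fix $\lambda > r^*(t) = \val(T^*_t)$. By definition of the infimum there is $\lambda'$ with $r^*(t)\le\lambda'<\lambda$ at which the behaviour LP of $T^*_t$ is feasible. First I check monotonicity. The coverage constraints and the statistic identity do not involve $\lambda$. Each ratio row reads $\ALG_v(x)\le\lambda\,\OPT_v$ with $\OPT_v\ge 0$, and $\OPT_v$ does not depend on $\lambda$. So a solution feasible at $\lambda'$ is also feasible at $\lambda$. That gives a feasible point $x$ of the behaviour LP at $\lambda$.

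Second, I have to unpack what a feasible point is, because the lemma asks for a behaviour with its statistic \emph{fixed} at some $n^*$. In the behaviour LP the statistic $n$ of the adaptive batch $A$ is itself determined by the pattern masses through the statistic identity: $n$ is the fraction of $A$-items that open bins. The counts of the complement batches are affine in $n$, so $\OPT_v$ at those stops depends on $n$. The way I would handle this is to note that the LP is really a union over $n$. Either $n$ is an LP variable and the ratio rows are written with $\OPT_v(n)$, or feasibility is certified simplexwise, as in the triangulation description. In both cases a feasible solution carries a definite value $n^*$, read off from $x$ via the statistic identity. I then set $x^*:=x$.

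Third, I verify the inequality at every stop. At stops outside the complement chain, $\OPT_v$ does not depend on $n$, so the ratio row is exactly the claim. At the stops on the complement chain, the counts are evaluated at $n^*$, and the ratio row of the LP at the point $x^*$ uses $\OPT_v(n^*)$. The one thing to confirm is that the LP uses the exact fractional covering optimum at $n^*$, not an upper bound on it.

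This is the main obstacle. If feasibility of the LP is phrased through the linear interpolant of $\OPT_v$ on simplices, then in a convex region the interpolant lies \emph{above} the true $\OPT_v$. A primal feasible point checked against the interpolant therefore gives only $\ALG_v\le\lambda\cdot\widehat{\OPT}_v$, which is weaker than the claim. To close the gap, I would first try to choose $n^*$ at a vertex of the triangulation, where the interpolant equals $\OPT_v$. Failing that, I would use the definition of $\val$ as the exact (non-triangulated) LP, in which $\OPT_v(n)$ enters exactly. If $n$ must be fixed for linearity, feasibility at $\lambda$ means some fixed $n^*$ works. The triangulation is only the certification device for infeasibility, so it plays no role in this primal direction.
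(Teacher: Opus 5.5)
Your proposal is correct and is the paper's argument: the paper proves Lemma W in one line, as the definition of $\val(T^*_t)=r^*(t)<\lambda$, and your monotonicity-in-$\lambda$ step and your reading of $n^*$ off the statistic identity are exactly what that line leaves implicit. You are also right that the triangulated interpolant plays no role in this primal direction, since it only certifies infeasibility; on $T^*_t$ it would be exact anyway, because Lemma 4.1 gives $\OPT_v=V_v$, which is affine in the statistic.
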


\begin{proof}
This is the definition of $\val(T^*_t) = r^*(t) < \lambda$.
\end{proof}

\begin{lemma}[volume-tightness]
For every stop $v$ of $T^*_t$ and every value of the statistic, $\OPT_v = V_v$ exactly in the behaviour LP, where $V_v$ is the volume of the items presented up to $v$; for concrete inputs this reads $\OPT_v = V_v + O(\varepsilon)N$.
\end{lemma}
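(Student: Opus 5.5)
The inequality $\OPT_v\ge V_v$ is one of the two exact facts recorded at the start of this section: every lex-feasible pattern has rational load at most one. So the content of the lemma is the reverse inequality. For each stop $v$ of $T^*_t$ and each value $n$ of the statistic, I will exhibit an explicit fractional packing of the items presented up to $v$ by lex-feasible patterns of rational load exactly one, possibly plus a sub-linear number of further bins; only the linear term matters after dividing by $N$. The plan is a stop-by-stop case analysis along the three continuations under $A$, plus the stops inside the prefix chain. In each case I write the multiset as counts per size class, note that each class's rational size is of the form $1/k$ or a complement pair summing to $1$, and group items into bins whose rational parts sum to exactly $1$.

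\textbf{Prefix chain and the stop after $A$.} The prefix batches have rational sizes $1/(6\cdot 7^{t-1}),\dots,1/49$, and $A$ has size $1/7$. I will use $k$ copies of size $1/k$ per bin, or mixed bins, so that the rational load is exactly one. Lex-feasibility then depends only on the signs of the infinitesimal parts. Items of the adaptive batch $A$ split into the small class and the large class, which carry opposite infinitesimal perturbations. I will pair them, or mix them with prefix items whose $\varepsilon$-coefficients are negative, so that every bin's infinitesimal total is $\le 0$ lexicographically. Any leftover items whose infinitesimal parts cannot be balanced occupy $o(N)$ bins, or use a vanishing fraction of pattern mass, and therefore vanish in the limit. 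This is exactly where the statement ``in the behaviour LP'' is used.

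\textbf{Continuations.} For the branch with items above $1/2$, each big item of rational size $1/2$ plus $\delta$ will be completed by small items totalling rational part $1/2$ minus an infinitesimal that dominates $\delta$. The batch sizes in \cite{r1} are chosen so that the counts match; the items near $1/3$ and $1/2$ in the second branch are handled the same way. For the complement branch, a pair $(5-2\varepsilon-3\gamma)/14+(7+\gamma)/14=12/14-\dots$ is topped up with one $1/7$ item, whose sign is chosen from the appropriate class so that the total is $1-2\varepsilon/14-2\gamma/14+\dots\le 1$. Because these counts are affine in $n$, precisely as the construction dictates, the number of such bins matches the number of $1/7$-items of the relevant class. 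The remaining items are then packed as in the previous paragraph.

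\textbf{Main obstacle.} The hard step is checking that the counts align exactly for every $n$, with no residue of positive linear size in a class that cannot be completed to load one. This is presumably why the affine counts of \cite{r1} take the form they do. I would first verify this symbolically, as a linear identity in $n$ per stop. If it fails at some endpoint of the statistic's range, I would fall back to convexity: $\OPT_v$ is convex piecewise-linear in $n$ (Section 2.1) and $V_v$ is affine, so equality at the vertices of the statistic's range, together with $\OPT_v\ge V_v$, only gives $\OPT_v\le$ the interpolant. That is not enough, so I would need tight packings at all $n$, obtained as convex combinations of packings at the endpoints, which works because coverage is linear. For concrete inputs, the infinitesimal overruns cost $O(\varepsilon)N$ volume, which gives the stated $O(\varepsilon)N$ form.
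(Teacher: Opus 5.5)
You have the right strategy, and it is what the paper relies on: $\OPT_v\ge V_v$ from rational load at most one, and the reverse inequality from explicit lex-feasible patterns of rational load exactly one. The paper takes those packings from Lemmas 3 and 4 of \cite{r1} and adds an exact LP check at $t=3$ (eight stops, eleven statistic values, gap zero). But your proposal never supplies the packings. Everything load-bearing is deferred to ``the batch sizes in \cite{r1} are chosen so that the counts match'', which you yourself list as unresolved.

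The one general device you offer for mismatches does not exist here. In the behaviour LP every class has a count linear in $N$ and lex-feasibility is exact. Nothing makes leftovers $o(N)$, and a pattern with positive infinitesimal total is infeasible, not approximately feasible. A class that cannot be completed to rational load one therefore forces patterns of rational load below one, which is waste of linear size, and the lemma would be false. The $\varepsilon\to0^+$ limit only hides the \emph{negative} slack of bins of rational load one (the $O(\varepsilon)N$ for concrete inputs). It never hides an overrun.

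The specific mechanisms you sketch also do not match the construction.

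\textbf{The batch $A$.} Its two classes do not carry opposite perturbations. The paper's remark that $1/2+3\gamma/14$ is the complement of a $B_{31}$ item and a small $A$ item forces the small class to be $(1+\varepsilon)/7$ at leading order, and the large class is larger still. So $A$ items cannot balance each other, and at most six fit in a bin. Every bin containing $A$ items must be completed by prefix items whose negative $\varepsilon$-coefficients absorb the excess. A tight packing at the stop $A$ then needs the prefix volume to be at least one sixth of the volume of $A$, because a bin with $k\le 6$ items of $A$ has prefix load $(7-k)/7\ge k/42$. That is a quantitative identity you must check, together with its analogue along the prefix chain for general $t$.

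\textbf{The complement branch.} The counts $\tfrac76-\tfrac76 n$ and $\tfrac76-\tfrac56 n$ differ by $n/3$, so your bins of one $B_{31}$, one $B_{32}$ and one $A$ item cannot exhaust the batches. Since two $B_{31}$ items do not fit beside a $B_{32}$ item, at least $n/3$ of the $B_{32}$ items sit in bins with no $B_{31}$ item. Those bins must be completed by $A$ and prefix items of rational total exactly $1/2$ and sufficiently negative infinitesimal part. ``The remaining items are packed as in the previous paragraph'' does not address this. You also do not treat the stops $B_{21}$ and $B_{31}$, where the partner batch has not yet arrived.

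\textbf{Your fallback does work.} All demands are affine in $n$, so $V_v$ is affine. Tight packings at the two endpoints of the statistic's range combine convexly into tight packings at every $n$: the interpolant of $\OPT_v$ between two points where it equals $V_v$ is $V_v$ itself. This reduces the lemma to finitely many exact identities per stop. What is missing is writing those packings down and verifying them, or citing \cite{r1} for them as the paper does.
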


\begin{proof}
Lemmas 3 and 4 of \cite{r1} exhibit packings of every stop's items with no wasted space up to $O(\varepsilon)$. We re-verified the statement exactly for $t=3$: at each of the eight stops and at eleven values of the statistic across its full range, the exact fractional optimum equals the rational part of the volume with gap zero.
\end{proof}

Lemma 4.1 is the whole reason for what follows. It is a design feature of \cite{r1} (its ``new approach to sizes'' chooses sizes so that every prefix packs perfectly, which makes $\OPT$ as small as possible and hence the ratio as large as possible), and it is \emph{not} shared by the earlier constructions: after the first batch of Yao's sequence \cite{r7}, the optimum packs six items of size just above $1/7$ per bin and wastes about a seventh of each bin, and it was exactly by adding batches to such constructions that the lower bound progressed from 1980 to 2018. The next theorem says that the feature which makes the 2018 construction strong is the feature that makes it rigid.

\subsection{The multi-batch theorem}

Partition the sizes by three conditions on the rational part $s_r$, with one lexicographic condition:
$S(\lambda)$: $s_r \le 1 - 1/\lambda$ ($\approx 0.35180$); $P(\lambda)$: $1/(2\lambda)\le s_r$ and $2s\le 1$ ($\approx 0.32409$; two items fit in a bin); $V(\lambda)$: $s_r\ge 1/\lambda$ ($\approx 0.64819$). Since $1-1/\lambda\ge 1/(2\lambda)$ for $\lambda\ge 3/2$, $S$ and $P$ overlap and together cover every size with $2s\le 1$. The sizes covered by none of them form the \emph{window} $W(\lambda) = \{s : 2s>1 \text{ (lex)},\ s_r < 1/\lambda\}$. A size is \emph{admissible} if it is not in the window.

An \emph{admissible extension} $T'$ of $T^*_t$ inserts finitely many new batches $X_1,\ldots,X_m$, each attached anywhere (as a new leaf child of any node, as a child of another new batch, or spliced into an existing edge so that the old continuation follows it), each holding $\mathrm{count}_i\cdot N$ items where $\mathrm{count}_i\ge 0$ is an affine function of the statistics observed before $X_i$, as in Section 2.1 (the proof uses only that it is a nonnegative number determined by the behaviour), each with one or more size classes (several if $X_i$ is itself adaptive, differing only in infinitesimal terms as the separation theorem provides, hence with a common rational part $s_r$) all lying in the same one of $S(\lambda)$, $P(\lambda)$, $V(\lambda)$. The adversary may stop after any batch, old or new. $X_i$ may carry its own statistic on which later counts depend. The original batches, their sizes and the condition defining the original statistic are unchanged; an extension that alters them is a different object, treated computationally in Section 5.4.

\begin{theorem}
For every $t\ge 3$, every $\lambda\in(r^*(t), r^*_\infty)$ and every admissible extension $T'$ of $T^*_t$: $\val(T')\le\lambda$.
\end{theorem}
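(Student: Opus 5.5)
We exhibit, for every admissible extension $T'$, a feasible behaviour for the behaviour LP of $T'$ at level $\lambda$. This gives $\val(T')\le\lambda$ directly, since $\val$ is the infimum of feasible levels. The construction starts from the witness $x^*$ of Lemma W with statistic $n^*$. On the original batches the algorithm behaves exactly as $x^*$ does. Where the original condition is re-applied along a path, the observed statistic is kept equal to $n^*$, so all original counts are the same as in the witness. Each new batch $X_i$ is packed by a rule that depends only on which of $S(\lambda)$, $P(\lambda)$, $V(\lambda)$ its size lies in, and that uses only new bins:
\begin{itemize}
\item \textbf{Class $V(\lambda)$:} each item gets its own bin.
\item \textbf{Class $P(\lambda)$:} items are packed two per bin. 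This is legal because $2s\le1$ holds lexicographically.
\item \textbf{Class $S(\lambda)$:} items are packed First-Fit into fresh bins reserved for that batch, so every closed bin has rational load above $1-s_r\ge 1/\lambda$.
\end{itemize}
Any statistic of an adaptive $X_i$ is read off from this deterministic packing. The behaviour is therefore a well-defined point in pattern space, and every later count is a fixed nonnegative number. Only the final count of a batch enters the ratio rows, so no generality is lost in treating counts as constants.

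Next we bound what each new batch costs the algorithm, measured in volume. Write $c_iN$ for the count of $X_i$. A $V$-batch costs $c_iN$ bins against volume $c_iNs_r\ge c_iN/\lambda$. A $P$-batch costs $c_iN/2$ bins against volume at least $c_iN/(2\lambda)$. An $S$-batch costs at most $\lambda\cdot c_iNs_r+O(1)$ bins. In every case the extra $\ALG$ is at most $\lambda\cdot\vol(X_i)$, up to an additive constant, and that constant vanishes after dividing by $N$. The per-class thresholds in the definitions of $S$, $P$, $V$ were chosen precisely so that this holds, and admissibility is exactly the statement that every size falls into one of the three cases.

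Now take any stop $v$ of $T'$. Let $u$ be the last original batch on the path to $v$; by convention $u$ is the root if the path contains no original batch. Let $I_u$ be the original items up to $u$ and $J$ the new items up to $v$. Then
\[\ALG_v\le \ALG_u(x^*)+\lambda\vol(J)\le \lambda\OPT_u+\lambda\vol(J)=\lambda\bigl(\vol(I_u)+\vol(J)\bigr)=\lambda\vol(I_u\cup J)\le\lambda\OPT_v.\]
The second inequality is Lemma W. The equality $\OPT_u=\vol(I_u)$ is Lemma 4.1, and it holds at every value of the statistic, which covers the case $n=n^*$. The last inequality is $\OPT\ge\vol$. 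If $u$ is the root, the bound reduces to the pure volume estimate.

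I expect the main obstacle to be splicing. When a new batch is inserted into an existing edge, the original continuation after it must still be packed as in $x^*$, while $x^*$ may place later original items into bins that were open at the splice point. This is why new items are kept in their own bins: the original sub-behaviour is then unaffected, and the $\ALG$ count decomposes additively. A second point needs the same care. The original adaptive condition, ``packed into a bin that already contains an item'', must evaluate identically for original items. It does, because original items never enter new bins. With that, the statistic stays $n^*$ and Lemma W applies verbatim.
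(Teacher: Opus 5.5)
Your proposal is correct and is essentially the paper's proof. You keep the witness $x^*$, with its statistic pinned at $n^*$, on the original batches. You cover each new batch by fresh bins chosen according to $S(\lambda)$, $P(\lambda)$, $V(\lambda)$, so that its cost is at most $\lambda$ times its volume, and you add this to $\ALG_{v^\circ}\le\lambda\OPT_{v^\circ}=\lambda V_{v^\circ}$ via Lemma 4.1 and the volume bound.

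The only differences are cosmetic. The paper works directly in pattern space, with pure patterns of $k$ items where $k$ is maximal with $k\,s_{\max,i}\le 1$ lexicographically, so $b_i\le\lambda\vol_i$ holds exactly with no $O(1)$ term. In your First-Fit step, closed bins have rational load at least $1-s_r$, not strictly above it; equality occurs, for example, when $s_r=1/3$ with a positive infinitesimal part. The weak inequality is all your argument needs.
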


\begin{proof}
Take the witness $x^*$ of Lemma W with statistic $n^*$. We construct a feasible behaviour $x'$ for $T'$ at $\lambda$ directly in pattern space, in two parts. (i) Every pattern of $x^*$ is kept with its mass. As a pattern of $T'$ it contains no new item, so it is lex-feasible on every continuation, covers the original batches exactly as before, and keeps the original statistic at $n^*$; hence the original batches have the same sizes and counts as in $T^*_t$ at $n^*$. (ii) The items of each new batch $X_i$, taken in the order of the tree, are covered by pure $X_i$ patterns: $k$ items per pattern, where $k$ is the largest integer with $k\,s_{\max,i}\le 1$ lexicographically, if $X_i\subseteq S$; two per pattern if $X_i\subseteq P$; one per pattern if $X_i\subseteq V$. If $X_i$ is adaptive, each pattern holds one opening item and the rest body items, which is lex-feasible because body items are no larger than opening items; the statistic of $X_i$ is then the mass of these patterns, so the counts of later batches that depend on it are determined and the construction proceeds batch by batch.

Write $c_i$ for the realised count of $X_i$ per $N$, $\vol_i = c_i s_r$ for its volume, and $b_i$ for the mass of its pure patterns. In case $S$, $(k+1)s_{\max,i}>1$ lexicographically forces $(k+1)s_r\ge 1$, so $k s_r\ge 1-s_r\ge 1/\lambda$ and $b_i = c_i/k\le\lambda\,c_i s_r$. In case $P$, $b_i = c_i/2\le\lambda\,c_i s_r$ since $2\lambda s_r\ge 1$. In case $V$, $b_i = c_i\le\lambda\,c_i s_r$ since $\lambda s_r\ge 1$. In all three cases
\begin{equation*}b_i\le\lambda\,\vol_i,\tag{2}\end{equation*}
with weak inequalities and no error term, which is why the boundaries of $S$, $P$, $V$ may be included. Now let $v$ be any stop of $T'$. The items presented up to $v$ are those of the path in $T^*_t$ ending at the deepest original node $v^\circ$ on the path (with $\ALG_{v^\circ} = V_{v^\circ} = 0$ if there is none), together with the new batches $X_i$, $i\in I_v$, on the path. Then $\ALG_v(x') = \ALG_{v^\circ}(x^*) + \sum_{i\in I_v} b_i$, and by Lemma W and Lemma 4.1,
\begin{equation*}\ALG_{v^\circ}(x^*) \le \lambda\,\OPT_{v^\circ} = \lambda\,V_{v^\circ}.\tag{1}\end{equation*}
Adding (1) and (2), $\ALG_v(x')\le\lambda\,(V_{v^\circ} + \sum_{i\in I_v}\vol_i)\le\lambda\,\OPT_v$ by the volume bound in $T'$. So $x'$ satisfies every ratio row of $T'$ at $\lambda$, and $\val(T')\le\lambda$.
\end{proof}

Lemma 4.1 is used exactly once, in (1), to convert the witness bound relative to the original \emph{optimum} into a bound relative to the original \emph{volume}, which is what makes it additive with the new items' volume. Without it, $\OPT_v\ge\OPT_{v^\circ}+\sum\vol_i$ would be false in general, since new items could fill slack in the original optimal packing; this is precisely what happens for the earlier constructions. Admissibility is used once per batch in (2) with that batch's own extreme size; the placement, count, adaptivity and mutual interaction of the new batches never enter, which is why the theorem is uniform over all of them. The whole argument lives inside the LP, so no rounding or $\varepsilon$ terms arise; an online algorithm realising $x'$ on a concrete input pays an additive constant per pattern type and an $O(\varepsilon)N$ term, both absorbed by the asymptotic ratio as in \cite{r1}.

\begin{corollary}
Let a finite list of new batches have all size classes strictly outside the closed window $\overline W = \{s : 2s > 1 \text{ (lex)},\ s_r\le 1/r^*_\infty\}$, and let $T'_t$ be $T^*_t$ extended by them. Then $\val(T'_t) < r^*_\infty$ for every $t$, and $\lim_{t\to\infty}\val(T'_t) = r^*_\infty$. The extended family proves exactly the bound the paper's family proves, never more.
\end{corollary}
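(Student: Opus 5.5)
The plan is to reduce the corollary to Theorem 4.2 by choosing $\lambda$ depending on $t$. It has three parts: an upper bound strictly below $r^*_\infty$ for each $t$, a matching lower bound by monotonicity, and the limit.

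\emph{Choosing $\lambda$.} The new batches form a finite list, so their rational parts $s_r$ are finitely many. Consider a size class with $2s>1$ (lex). Lying strictly outside $\overline W$ means $s_r>1/r^*_\infty$. Since the list is finite, there is a margin $\delta>0$ with $s_r\ge 1/r^*_\infty+\delta$ for all such classes. The map $\lambda\mapsto 1/\lambda$ is continuous, so there is a rational $\lambda_0<r^*_\infty$ such that for every $\lambda\in[\lambda_0,r^*_\infty)$ every such class satisfies $s_r\ge 1/\lambda$, that is, lies in $V(\lambda)$. Classes with $2s\le 1$ are admissible for every $\lambda\ge 3/2$, since $S(\lambda)\cup P(\lambda)$ covers them. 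So for $\lambda\in[\lambda_0,r^*_\infty)$ every new class is admissible.

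\emph{Admissibility of the extension as a whole.} I must also check that each batch's classes lie in a single one of $S,P,V$. An adaptive batch has classes with a common $s_r$, so this holds unless the batch straddles the $2s\le1$ boundary. The $2s>1$ classes are in $V$, and no $2s\le 1$ class can be in $V$, since $1/\lambda>1/2$. Straddling would therefore require a common $s_r=1/2$ with $s_r\ge 1/\lambda$, which is impossible. Hence each batch is uniformly in $S\cup P$ or in $V$. For the $S\cup P$ case I pick $S$ or $P$ by the common $s_r$, using that the lexicographic conditions depend only on $s_r$ except at $2s\le1$.

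\emph{Bounds for each $t$.} Fix $t$ and take a rational $\lambda\in(\max(r^*(t),\lambda_0),\,r^*_\infty)$. This interval is nonempty because $r^*(t)<r^*_\infty$. Theorem 4.2 then gives $\val(T'_t)\le\lambda<r^*_\infty$. For the reverse inequality, the adversary in $T'_t$ may follow the original continuations and simply never stop at new batches. I expect to argue $\val(T'_t)\ge\val(T^*_t)=r^*(t)$ by restricting a feasible behaviour of $T'_t$ to the original stops. The main obstacle is here: new batches spliced into edges change $\OPT$ and $\ALG$ at later original stops, so the restriction is not literally a behaviour for $T^*_t$. I would first try to use that counts may be zero, treating the extension as presenting the original tree whenever the realised counts vanish. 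Failing that, I would weaken the lower bound to $\liminf_t\val(T'_t)\ge r^*_\infty$ by choosing the extension's zero-count instantiation, or simply interpret the corollary's extended family as containing $T^*_t$ itself.

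\emph{Limit.} Once $r^*(t)\le\val(T'_t)<r^*_\infty$ holds for each $t$, I let $t\to\infty$. Since $r^*(t)\uparrow r^*_\infty$, squeezing gives $\lim_t\val(T'_t)=r^*_\infty$. Combined with the strict upper bound, this shows the supremum over the extended family equals $r^*_\infty$ and is not attained.
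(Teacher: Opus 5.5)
Your upper-bound half matches the paper's proof. Finiteness of the list gives $\lambda_0<r^*_\infty$ beyond which every new class is admissible, and Theorem 4.2 at any $\lambda_t\in(\max(r^*(t),\lambda_0),r^*_\infty)$ gives $\val(T'_t)<r^*_\infty$. Two of your remarks are, if anything, tidier than the paper's boundary-case discussion: classes with $2s\le 1$ need no margin, since they lie in $S(\lambda)\cup P(\lambda)$ for all $\lambda\ge 3/2$; and no adaptive batch can straddle the $2s\le 1$ boundary, since a class with $2s>1$ outside $\overline W$ has $s_r>1/r^*_\infty>1/2$. The gap is the lower bound $\val(T'_t)\ge r^*(t)$, which you never prove, and your fallbacks cannot prove it. The counts of the new batches are part of the given extension, so appealing to zero counts or to ``the zero-count instantiation'' replaces $T'_t$ by a different tree and says nothing about $\val(T'_t)$.

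The paper settles this step with ``$T'_t$ contains every stop of $T^*_t$'', which is your restriction argument. That argument is valid exactly when no new batch is spliced above an original stop, i.e.\ for new leaves, new branches at existing nodes, and children of new batches. In that case, delete the new items from every pattern of a behaviour feasible for $T'_t$ at $\lambda$. The resulting patterns:
\begin{itemize}
\item are lex-feasible on every original path, since each such path is a path, or a prefix of a path, of $T'_t$;
\item cover the original batches exactly as before;
\item keep the original statistic, since no new item precedes an item of $A$ on its path;
\item leave the item set at every original stop $v$ unchanged, hence also $\ALG_v$ and $\OPT_v$.
\end{itemize}
So $\val(T^*_t)\le\val(T'_t)$, and this is what you should write out.

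For spliced batches your worry is not a technicality: the inequality is false. Splice immediately before $A$ a batch $X$ of $KN$ items of size $1$; these lie in $V(\lambda)$ and strictly outside $\overline W$. Pack each prefix batch in pure maximal patterns, and put every item of $X$ and of the later batches alone. Every stop then has ratio at most $\max\{\rho,1+C/K\}$, where $\rho\le 1/(1-s)$ for the largest prefix size $s$ (about $1.02$), and $C$ bounds the total count of the batches from $A$ on along a path. Both are bounded independently of $t$, so for large $K$ this bound is far below $r^*(3)$. Hence the limit claim, and the sentence that the family ``proves exactly the bound'', hold only for non-spliced extensions. The bound $\val(T'_t)<r^*_\infty$ still holds for all admissible extensions. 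No argument, neither yours nor the paper's one-line justification, can cover the spliced case.
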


\begin{proof}
Admissibility is monotone in $\lambda$ (each condition only gets easier as $\lambda$ grows), and every boundary case of $\overline W$ lies strictly inside one of $S$, $P$, $V$ (a size with $s_r = 1-1/r^*_\infty$ is strictly inside $P$, since $1-1/r^*_\infty > 1/(2r^*_\infty)$; a size with $s_r = 1/(2r^*_\infty)$ is strictly inside $S$; sizes above the window have $s_r > 1/r^*_\infty$). Hence there is $\lambda_0 < r^*_\infty$ such that every new size is admissible for every $\lambda\in[\lambda_0, r^*_\infty)$. For each $t$ pick $\lambda_t\in(\max(r^*(t),\lambda_0), r^*_\infty)$; Theorem 4.2 gives $\val(T'_t)\le\lambda_t < r^*_\infty$, and letting $\lambda_t\to r^*_\infty$ gives $\limsup_t\val(T'_t)\le r^*_\infty$, while $\val(T'_t)\ge\val(T^*_t) = r^*(t)\to r^*_\infty$ because $T'_t$ contains every stop of $T^*_t$.
\end{proof}

A $t$-independent margin below $r^*_\infty$ cannot be asked for, since $\val(T'_t)\ge r^*(t)\to r^*_\infty$.

\begin{corollary}[what an improvement must contain]
Any extension of the 2018 construction by new batches, in any number, with counts depending on the observed statistics, placed anywhere, the new batches adaptive or not, with the original batches and the condition defining the original statistic unchanged, that improves on $r^*_\infty$ must include a new batch with a size class in the closed window $\overline W = \{s : 2s>1 \text{ (lex)},\ s_r\le 1/r^*_\infty\}$.
\end{corollary}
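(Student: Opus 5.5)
This corollary is the contrapositive of Corollary 4.3, so I would argue by contradiction. I read ``improves on $r^*_\infty$'' as: the family $(T'_t)_{t}$ obtained by adding the new batches proves a bound above $r^*_\infty$. Equivalently, for some $t$ we have $\val(T'_t) > r^*_\infty$, or at least $\sup_t \val(T'_t) > r^*_\infty$. Suppose such an extension has every new size class outside $\overline W$. The original batches and the original statistic are unchanged, so each $T'_t$ is an admissible-type extension of $T^*_t$ in the sense of Section 4.2, except that admissibility is stated relative to a parameter $\lambda$. The task is therefore to pick one $\lambda$ that works for all the finitely many new size classes at once, and then quote Theorem 4.2.

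First, compactness. There are finitely many new batches, and their size classes share finitely many rational parts $s_r$; the infinitesimal parts matter only for the lexicographic test $2s>1$. For each class, being outside $\overline W$ means one of two things. Either $2s\le 1$ lexicographically, or $s_r > 1/r^*_\infty$ (which also covers the case $2s>1$ with $s_r > 1/r^*_\infty$). I would then check three cases at $\lambda = r^*_\infty$.

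\begin{itemize}
\item If $s_r > 1/r^*_\infty$, the class lies strictly inside $V$.
\item If $2s\le 1$ and $s_r < 1-1/r^*_\infty$, it lies strictly in $S$.
\item If $2s\le 1$ and $s_r \ge 1-1/r^*_\infty$, it lies strictly in $P$, because $1-1/r^*_\infty > 1/(2r^*_\infty)$ (this holds since $r^*_\infty > 3/2$).
\end{itemize}

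Each of these strict inequalities survives a small decrease of $\lambda$, since the thresholds $1-1/\lambda$, $1/(2\lambda)$ and $1/\lambda$ are continuous in $\lambda$. Taking the minimum over finitely many classes gives a $\lambda_0 < r^*_\infty$ such that every new class is admissible for all $\lambda\in[\lambda_0,r^*_\infty)$. This is exactly the step already used in the proof of Corollary 4.3.

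Next, for each $t$ choose $\lambda_t \in (\max(r^*(t),\lambda_0), r^*_\infty)$; this interval is nonempty because $r^*(t) < r^*_\infty$. Theorem 4.2 applies verbatim: placement anywhere, statistic-dependent affine counts, and adaptivity of the new batches are all allowed there, since its proof only uses that counts are nonnegative numbers determined by the behaviour. It gives $\val(T'_t)\le\lambda_t<r^*_\infty$ for every $t$. Hence no member of the family exceeds $r^*_\infty$, and by Corollary 4.3 the supremum equals $r^*_\infty$, so there is no improvement. This contradicts the assumption, so some new size class must lie in $\overline W$.

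The main obstacle is scope rather than computation. I must make sure that ``any extension'' in the statement matches the admissible extensions of Theorem 4.2. In particular, a new batch may itself be adaptive with several size classes differing infinitesimally. All of them share the same $s_r$, but a class could straddle the lexicographic boundary $2s=1$, with some subclasses satisfying $2s\le1$ and others $2s>1$. A straddling batch is not covered by Theorem 4.2, which requires all classes of a batch to lie in one of $S$, $P$, $V$. However, in that case the subclass with $2s>1$ has $s_r = 1/2 \le 1/r^*_\infty$, so it lies in $\overline W$ and the conclusion holds trivially. I would handle this case first, and in the same way dispose of any batch whose classes fall in different regions. The remaining batches have all classes in a single region, and Theorem 4.2 applies to them.
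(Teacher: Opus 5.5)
Your proposal is correct and follows essentially the paper's own route: Corollary 4.4 is the contrapositive of Corollary 4.3, whose proof is the same uniform choice of $\lambda_0<r^*_\infty$ followed by Theorem 4.2 at some $\lambda_t\in(\max(r^*(t),\lambda_0),r^*_\infty)$. Your extra check on adaptive batches fills a detail the paper leaves implicit when it asserts that every new size is admissible: classes of one new batch can fall in different regions only when $s_r=1/2$ and some class has $2s>1$, and that class already lies in $\overline W$.
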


The window is a lexicographic set: it contains every size $1/2 + (\text{positive infinitesimal})$, and three batches of the construction itself, $B_{11}$, $B_{22}$ and $B_{32}$, have such sizes. Extensions that change the original batches, for instance by making $B_{21}$ adaptive or by replacing the opener condition on $A$, are outside the theorem and are treated computationally in Section 5.4.

\subsection{A single window batch, at $t=3$}

The window is exactly the range in which no fresh-bin rule satisfies (2): items above $1/2$ cannot be paired, and below $1/\lambda$ a singleton bin costs more than $\lambda$ times its volume, so the new item's interaction with the original bins cannot be avoided. For a single window batch added to $T^*_3$ we closed the case by machine, in the sense made precise below.

\begin{theorem}[machine-certified]
Let $t=3$, let $u$ be any node of $T^*_3$, and let $X$ be a new leaf batch under $u$ with a size in the window. Let the count of $X$ be a constant $c\ge 0$ if $u\in\{C_3, C_2\}$, where the statistic is not yet observed, and $c_0 + c_1 n\ge 0$ with $c_0, c_1$ rational if $u$ is $A$ or a node below it. Then $\val(T^*_3 + X)\le\lambda_3$ for $\lambda_3 = 1.5427543$ when $u\in\{C_3, C_2, A, B_{11}, B_{21}, B_{22}\}$ and $\lambda_3 = 1.5427805$ when $u\in\{B_{31}, B_{32}\}$; both are below $r^*_\infty$.
\end{theorem}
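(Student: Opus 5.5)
Since the statement is labelled machine-certified, the proof reduces an infinite family of extensions (all window sizes, all attachment nodes, all counts) to finitely many exact LP feasibility checks. For each $u$ and each window size class, the goal is to exhibit a behaviour $x'$ for $T^*_3+X$ satisfying every ratio row at $\lambda_3$. The construction follows the proof of Theorem 4.2 as far as it goes. We keep the witness $x^*$ of Lemma W, now at $\lambda_3>r^*(3)$, which leaves slack $(\lambda_3-r^*(3))\OPT_v=(\lambda_3-r^*(3))V_v$ at every old stop by Lemma 4.1. The only new stop is the leaf after $X$, since $X$ is a leaf and the old stops are untouched. At that stop the pure-pattern rule fails, because (2) is false in the window. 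Instead, $x'$ is allowed to place $X$ items into the bins of $x^*$ that are open along the path to $u$, or to open singleton bins for them. The free variables are therefore the masses of old patterns, extended by at most one window item each, plus singleton $X$ patterns. For fixed $u$, $X$ and statistic, this gives a small LP with one new ratio row, $\ALG\le\lambda_3\OPT$ at the new stop.

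The key steps, in order, are as follows.

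\textbf{Step 1: discretise the window.} The window is $s_r\in[1/2,1/\lambda_3)$ together with the lexicographic infinitesimal refinements of $1/2$. Which old items fit beside an $X$ item depends only on finitely many comparisons $s+s_j\le 1$ against the old sizes $s_j$ of $T^*_3$. The window therefore splits into finitely many intervals, and within each interval the set of lex-feasible extended patterns is constant. I would take the endpoint cases at $s_r=1/2$ separately, using the $\varepsilon,\gamma$ ledger.

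\textbf{Step 2: bound $\OPT$ at the new stop.} On each interval, $\OPT_v$ is convex piecewise-linear in $(s_r,c,n)$, or in $(s_r,c_0,c_1,n)$ when $u$ is at or below $A$. An upper bound on it suffices, because an upper bound on $\OPT$ only makes the new ratio row harder to satisfy. I would obtain this bound by the triangulation-with-vertex-interpolation device of Section 2.1. The count parameters are homogeneous, so after normalising the total count I can restrict to a compact simplex.

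\textbf{Step 3: certify feasibility.} On each simplex of the triangulation, I would exhibit an explicit behaviour for the extended LP, taken as a convex combination of vertex behaviours. The ratio rows are linear in the behaviour for fixed parameters. After interpolation the $\OPT$ bound is linear as well, so vertex feasibility at the interpolated bound transfers to the whole simplex. Each certificate is an exact rational primal point, checked by an independent verifier.

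I expect Step 3 near $s_r=1/2$ to be the main obstacle, in particular under $B_{31}$ and $B_{32}$, where the stated level is only $\lambda_3=1.5427805$. There, window items nearly pair with the existing items just above or below $1/2$, so $\OPT$ drops almost to the volume. The slack left by $x^*$ at $t=3$ is too small to absorb this, which presumably explains why the larger $\lambda_3$ is needed on those two nodes. In that regime I would first try refining the triangulation adaptively, and replacing $x^*$ by the truncation-4 witness restricted to the relevant subtree, since $r^*(4)<1.5427805$. If the certificate still fails, I would try re-optimising the old patterns jointly with $X$, rather than keeping $x^*$ fixed.
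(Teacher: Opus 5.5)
Your Step~2 gets the direction of the bound on $\OPT$ backwards, and Step~3 inherits the error. To certify $\val(T^*_3+X)\le\lambda_3$ you must exhibit a behaviour with $\ALG_v\le\lambda_3\OPT_v$ at every stop, so you need a \emph{lower} bound $L_v\le\OPT_v$. Checking $\ALG_v\le\lambda_3U_v$ against an upper bound $U_v\ge\OPT_v$ is a weaker condition, not a harder one.

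The triangulation device of Section~2.1 produces exactly such an upper bound, since a convex function lies below its chords. It serves the opposite direction there: Farkas certificates that $\val(T)>\lambda$. Concretely, suppose $x_i$ is feasible at vertex $p_i$ and $p=\sum_i\mu_ip_i$. Then $\sum_i\mu_ix_i$ satisfies $\ALG\le\lambda_3\sum_i\mu_i\OPT(p_i)$, but convexity only gives $\OPT(p)\le\sum_i\mu_i\OPT(p_i)$. So vertex feasibility does not transfer to the interior.

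The missing idea is to fix, on each cell, one affine lower bound on $\OPT$ that is valid on the whole cell, and to find corner witnesses against that fixed bound. The paper uses three such bounds:
\begin{itemize}
\item the volume bound, with the $X$-volume taken at the left endpoint of the size interval;
\item $\OPT\ge{}$number of items above $1/2$;
\item a covering-LP dual vector $y$, giving $\OPT\ge y\cdot d$ affine in the demands. It is checked dual-feasible at the smallest size of the interval, hence valid at every larger size, because a larger $X$ only removes patterns.
\end{itemize}
On the behaviour side, use patterns enumerated at the right endpoint, which remain feasible for every smaller size. With all constraints then affine in the size and in the count, corner witnesses interpolate soundly. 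This is the structure of the paper's certificates.

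Two further points need repair.

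First, the statistic is not an instance parameter. The algorithm chooses it, and that choice determines the count $c_0+c_1n$. Triangulating over $(s_r,c_0,c_1,n)$ asks for a witness at every $n$. That is more than is needed, and it cannot be expected at a level this close to $r^*(3)$, where the rows of $T^*_3$ alone already constrain $n$ tightly. Nor do witnesses with different statistics at different counts automatically yield, for given $(c_0,c_1)$, a behaviour whose own statistic produces its own count. The paper gives every witness the same statistic $n^*=\lambda_3/7$. A statistic-dependent count then becomes the fixed count $c_0+c_1n^*$, and nodes at or below $A$ reduce to the constant-count case. Keeping $x^*$ fixed gives you this for free, but your fallback of re-optimising the old patterns must preserve it explicitly.

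Second, the count half-line is not compact. The problem is also not homogeneous in the count of $X$ alone, since the old batch counts are fixed, so normalising does not give a compact simplex. The paper covers counts up to a last anchor and then uses a tail argument. Once $\OPT$ is driven by the number of items above $1/2$, adding pure singleton $X$ bins raises $\ALG$ with slope one while $\OPT$ rises at least as fast, so every ratio row is preserved.
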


The certificates have the following structure, each element stored and re-verified by fresh exact arithmetic. (1) \emph{Pattern monotonicity:} a virtual-bin pattern feasible when $X$ has size $q$ is feasible for every size at most $q$, so one pattern list, enumerated at the right endpoint of a size interval, underlies the whole interval. (2) \emph{Corner witnesses:} the window times the count half-line is covered by finitely many rectangles (a size interval $(p,q]$ times a count segment $[a,b]$ from the anchor list $\{0,\tfrac14,\tfrac12,1,\tfrac32,2,3\}$); for each rectangle one fixed row configuration, checking each stop's ratio row against a chosen valid lower bound on $\OPT$ (the volume bound with the $X$-volume set to $p$; the big-item bound $\OPT\ge$ number of items above $1/2$; or a fixed covering-dual vector extracted at the infimum size class and re-verified dual-feasible there, hence valid throughout the rectangle), admits exact behaviour witnesses at all four corners. (3) \emph{Interpolation:} for fixed count all constraints are affine in the size's rational part, and for fixed size affine in the count, so corner witnesses interpolate across the rectangle. (4) \emph{Tail:} beyond the last anchor, adding pure single-$X$ bins increases $\ALG$ with slope one while $\OPT$ grows at least as fast, since each window item needs its own bin. (5) \emph{Common statistic:} every stored witness, at every attachment node, every size interval and every count, has the same statistic value $n^* = \lambda_3/7$; the rectangle at $B_{31}$ above $s = 3/5$ was re-certified with the statistic pinned to that value, and the verifier prints the set of statistic values per rectangle. This is what covers counts of the form $c_0 + c_1 n$. An algorithm that plays the prefix with statistic $n^*$ faces the fixed count $c_0 + c_1 n^*$, for which a witness with statistic $n^*$ exists; interpolation between corner witnesses with equal statistics preserves the statistic, and the tail adds pure single-$X$ patterns, which do not change it. Without a common value the step would not be automatic: for a fixed row configuration the feasible set is convex in (behaviour, statistic, count), but the true feasible set need not be, so witnesses with different statistics at different counts would not have covered a statistic-dependent count.

The certificates cover the window completely at all eight attachment nodes: a single rectangle $(1/2, 0.6482]\times[0,\infty)$ at $C_3, C_2, A, B_{11}, B_{21}, B_{22}$; two rectangles at $B_{31}$ split at $s = 3/5$, the exact boundary at which a bin hosting both $X$ and a $B_{31}$ item stops being feasible, with an exactly tight ridge at count $\approx 1.27134$ (where $\OPT$ becomes purely count-driven) crossed inside the rectangles by re-shaped witnesses; and one rectangle at $B_{32}$. An independent second pass certifies the same window at $C_3$, $C_2$ ($5126$ combinatorial cells) and $A$ ($14280$ cells) by one exact witness per cell, with zero failures. The hunt itself surfaced adversarial phenomena that were each first flagged by a certificate failure and then shown not to be improvements: sizes constructed by the machine as exact complements of existing items (e.g. $1/2 + 3\gamma/14$, the complement of a $B_{31}$ item and a small $A$ item); interactions of several big items; and kinks of $\OPT$ at structural rationals such as count $1/3$, which required Stern--Brocot rather than dyadic splitting of segments.

Together with the analytic cases (the $S$, $P$, $V$ cases of Theorem 4.2 for $m=1$, valid for every $t$), this establishes that no single-leaf extension of $T^*_3$ improves the published bound $r^*_\infty$, and that for arbitrary $t$ only window-sized items could even in principle require further certification. The statement is about $r^*_\infty$, not about the value of $T^*_3$ itself: $\lambda_3$ exceeds $r^*(3)$ by $1.2\times 10^{-7}$ at six nodes and by $2.6\times 10^{-5}$ at $B_{31}$ and $B_{32}$, so a window leaf that raised the truncation-$3$ value by less than those margins is not excluded, only one that reaches the limit.

\section{A computational map of the game class}

\subsection{The machine and its gates}

The behaviour LP of Section 2.1 was implemented with exact rational arithmetic on demand, Farkas sifting over enumerated maximal patterns (sound for both sides: a genuine bin extends greedily to a maximal pattern with the same opening batch; supersets only cover more), up to two adaptive statistics with the constructive simplicial bound on $\OPT$, and an independent checker for certificates. Before any experiment it reproduced, as exact brackets containing the analytic values, the classical bounds of Yao ($3/2$), van Vliet ($1.5401468$) and Balogh--Békési--Galambos ($248/161$), and the 2018 construction at $t=3$ (bracket $(1.542754175, 1.542754177]$) and $t=4$ ($(1.542780355, 1.542780366]$, $62011$ patterns). An exact certificate at $\lambda = 1.542754175$ for the union of the paper's tree with two complement families and a quarter-chain was produced and verified. That certificate places the union's value above $1.542754175$, as for the paper's tree; the complementary direction is an exact feasible behaviour at $1.542754177$. Adding the searched variants to that union one at a time, up to twenty batch nodes and $17132$ patterns, both directions hold at every step, the upper one obtained by re-solving the behaviour LP at the feasible behaviour's own statistic value with the exact optimum rather than its interpolant, so each union has value inside the paper's own bracket $(1.542754175, 1.542754177]$.

\subsection{Structural variants}

About sixty structural variants of $T^*_3$ were evaluated: seven added fourth branches, deeper chains, prefix branching, hill climbing over all count coefficients and over sixty candidate sizes, adaptive continuations with a second statistic at several nodes, the $(k,m)$ family of complements fitting $k$ small adaptive items and punished by a large one, split and layered adaptive batches at two scales, and alternative prefix scale sequences. Every variant that retains the construction's full punishment capability has exactly the paper's value ($1.5427541456$ in the machine's float bisection at $t=3$; the exact value is $r^*(3)$), and every other variant is strictly worse. The exact primal at $\lambda$ just above the bound shows why: every stop constraint of the paper's tree is simultaneously tight (only $C_2$ has slack, $4\times 10^{-6}$), and the optimal behaviour is a balanced portfolio of multi-branch virtual bins against which any added continuation sits slack. This is an empirical strengthening of \cite[Remark 10]{r1}, which states optimality of the weight parameter for the paper's own formula.

\subsection{Multi-batch adversaries with a second statistic}

Theorem 4.5 leaves open extensions by several batches at once, and in particular a coupled extension in which a second adaptive batch (a fresh bin-opening statistic at another scale) is paired with a continuation that punishes its items through counts affine in the new statistic; on the upper-bound side, it is exactly such coupled moves, each neutral alone, that produced the gains of Section 3. We built a search over a general tree specification (batches attach anywhere; at most one added adaptive batch, since the machine supports two statistics; later counts affine in both statistics) with a beam over compositions of moves that deliberately keeps states tied at the record, so that depth-two coupled moves are reached. Candidates were screened by a single feasibility check at $\lambda = 1.5427544$, just above $r^*(3)$ and well below $r^*_\infty$, so that infeasibility there certifies a genuine improvement over the paper's construction at $t=3$; attaching the second statistic under the paper's own statistic-dependent complement chain makes the two statistics degenerate and was excluded. A focused probe evaluated $132$ coupled pairs (seven scales for the second batch from $1/8$ to $1/3$, two attachment points, two counts, five punishment continuations each) and the beam evaluated $718$ compositions through depth one. Every one of these $\approx 850$ two-statistic trees is feasible at $1.5427544$. Precisely: none has value above $1.5427544$, so none improves the value of $T^*_3$ by more than $2.2\times 10^{-7}$ and none approaches $r^*_\infty$; an improvement below that threshold would pass the screen and is not excluded by it. The float bisection used for the single-statistic variants of Section 5.2 and in Table 2 sits about $3\times 10^{-8}$ below the exact value and has the corresponding resolution.

\subsection{The unused conditions of the separation theorem}

The separation theorem of \cite{r1} is stated for an arbitrary binary condition on how an item is packed; the paper's own examples include ``the item is packed into a bin already containing an item of size above $1/2$''. The paper uses only ``packed into a bin that already contains at least one item'', and every result above, and every construction in the literature, lives inside that one choice. Other conditions are legal at no cost, and to our knowledge had never been priced.

We generalized the pattern engine so that the large/small split of an adaptive batch is derived from the bin's contents at arrival via a condition, for three families: \emph{opened-or-same$(k)$}, satisfied iff the bin was opened by an earlier batch or already holds at least $k$ items of this batch ($k=1$ is the paper's condition); \emph{same$(k)$}, satisfied iff the bin already holds at least $k$ items of this batch, regardless of who opened it (for $k=1$: the first item of the batch in every bin is large, even in prefix-opened bins, a condition the paper cannot express); and \emph{contains$(X)$}, satisfied iff the bin already holds an item of an earlier batch $X$. Soundness of the maximal-pattern relaxation requires care: the derived counts are monotone in the total for the first two families, and for the third the enumeration is split into the halves ``$X$ absent'' and ``$X$ present'' with maximality taken within each half, since a genuine bin with $X$ absent may only be extended within its half. Only the statistic rows of the LP change (the statistic becomes the number of large items rather than the number of bins opened). Under the paper's condition the engine reproduces the original machine bit for bit ($818$ patterns, the same value).

\begin{table}[htbp]
\centering\small
\caption{Adaptive conditions at the batch $A$ of $T^*_3$ (and one at $B_{21}$), float bisection. The paper's condition gives $1.5427541456$; $1.5403447181$ is the value at which adaptivity contributes nothing.}
\begin{tabular}{p{0.50\linewidth} p{0.13\linewidth} >{\raggedright\arraybackslash}p{0.28\linewidth}}
\toprule
Condition & Counts & Value \\
\midrule
opened-or-same(1) (the paper) & paper's & 1.5427541456 \\
opened-or-same($k$), $k=2..6$, both orientations & paper's & 1.5403447181 \\
same(1), large side unchanged & paper's & 1.5419948197 \\
same(1), counts retuned ($\pm 1/12..1/3$ on all coefficients) & retuned & 1.5419948197 (no improving move) \\
same(1), sizes, counts and branch structure re-optimized & full menu & 1.5419948198 (no improving move among 120) \\
same($k$), $k=2..6$; all flipped orientations & paper's & 1.5403447181 \\
contains($C_3$), contains($C_2$), both orientations & paper's & 1.5403447181 \\
$B_{21}$ adaptive on contains($A$), second statistic & paper's & 1.5427541457 (inert) \\
$B_{21}$ adaptive on contains($A$), $B_{22}$ count riding it ($\pm$) & $1\pm n/2$, $1-n$ & 1.5232745990 \\
\bottomrule
\end{tabular}
\end{table}

Every alternative condition is either inert (the split changes nothing that $\OPT$ can exploit and no continuation uses the statistic) or strictly dominated, most collapsing to the value at which adaptivity contributes nothing. The mechanism is visible in the construction: the complement branch of \cite{r1}, with sizes $(5-2\varepsilon-3\gamma)/14$ and $(7+\gamma)/14$ and counts $\tfrac76 - \tfrac76 n$ and $\tfrac76-\tfrac56 n$, is engineered around the items that the algorithm was \emph{forced to isolate}, which is what the opener condition detects. Under any other condition the large class no longer coincides with what the complements punish, and neither count retuning nor the finite menu of sizes and branch templates supplies a replacement complement design. Within the paper's tree family the opener condition is the uniquely useful one; a different condition would require its own complement design from scratch, which we leave open.

\subsection{Summary}

\begin{table}[htbp]
\centering\small
\caption{Status of the natural extensions of the 2018 construction.}
\begin{tabular}{p{0.47\linewidth} p{0.47\linewidth}}
\toprule
Extension & Status \\
\midrule
Any number of non-window batches, anywhere, any counts, adaptive or not & proved for all $t$ (Theorem 4.2) \\
One window batch added to $T^*_3$, any node & value below $r^*_\infty$, machine-certified (Theorem 4.5) \\
Structural variants: branches, chains, prefix branching, scales & computed, $\approx 60$ variants, none above the float screen; unions of the main families bracketed exactly in $(1.542754175, 1.542754177]$ up to 20 nodes \\
Two statistics, coupled compositions & computed, $\approx 850$ trees, none above the exact screen at $1.5427544$ \\
Non-opener adaptive conditions, three families & computed, all inert or dominated (Table 2) \\
Two or more window batches; window batch with a new statistic; the whole class & open (conjecture: pinned) \\
\bottomrule
\end{tabular}
\end{table}

\section{Negative results on the algorithm side, with mechanisms}

\subsection{Marking does not bind in Advanced Harmonic}

Extreme Harmonic \cite{r3} beats the Super Harmonic floor of $19/12$ by a marking mechanism whose analytic content is a constraint on the distribution $\chi$ of optimal bins: the critical bin $q_1$ consisting of one huge item, one medium item of the threshold class $k$ and sand satisfies $\chi(q_1)\le \frac{1-\rho_k}{1+\rho_k}\sum_{q\ne q_1}\chi(q)\,m_k(q)$, where $\rho_k$ is the fraction of class-$k$ items packed alone and $m_k(q)$ counts class-$k$ items in $q$. Advanced Harmonic's Best-Fit-on-volumes conversion of temporary containers keeps the smallest lone class-$k$ items lone, which is the size guarantee the marking lemma needs, so the transplant is plausible. We folded the constraint into the per-scenario analysis as a distribution LP with a dual variable $\mu\ge 0$ and modified cuts, sound given the (deferred) marking lemma for Advanced Harmonic, and re-optimized. The result is $\mu = 0$ in every large-threshold scenario tested: fourteen scenarios sampled across the medium range from $0.3483$ to $1/3$, and the binding scenario $(3/7,1/2]$ with the previously unused container type of its threshold class opened at six fractions. Marking buys nothing.

The reason is that Advanced Harmonic's binding patterns are not the critical bins that marking caps. In scenario $(1/3, 1067/3200]$ the binding pattern is $\{$two items just above $1/3$, one near $1/4$, one near $1/13$, sand$\}$, a \emph{two-medium} bin; in the record scenario $(3/7,1/2]$ the heavy pattern is $\{$huge, an item near $1/3$, several small items$\}$, whose medium item is not of the threshold class, so the marking constraint cannot reach it, and the class-2 critical bin weighs only $1.5613$ against $R\approx 1.578$. The containers and postponement of Advanced Harmonic already defuse the simple huge-plus-medium critical bins that bind Harmonic++ and that marking was invented to beat; on this algorithm, marking and postponement are substitutes, not complements. This also explains the ordering of the known bounds: Advanced Harmonic ($1.5783$) is already below marking-enabled Son of Harmonic ($1.5813$, corrected) and its red-sand variant ($1.5787$).

\subsection{Bonus items and red sand target the same bins}

The bonus-item rule of \cite{r3} (exceed the target proportion of paired items only when a matching large item exists, relabelling the surplus as sand) adds no weight term to the analysis; it only powers the constraints that bind at the critical bins $\{$one medium, one item above $1/2\}$. A two-medium bin has no partner above $1/2$ and never becomes a bonus configuration; raising the red fraction even raises its weight. Red sand, the mechanism of the $1.5787$ variant, is already present in Advanced Harmonic as the $17/60$ tiny-container cap. In both published analyses the binding pattern is the single-medium-plus-large critical bin; two-medium bins are never binding there. The entire Extreme Harmonic toolbox attacks a pattern that Advanced Harmonic has already neutralized, and does not touch the pattern Advanced Harmonic is stuck on.

\subsection{Consequence}

The remaining slack of Advanced Harmonic is guarded by two-medium bins and by huge-plus-medium-plus-multismall bins, and no mechanism in the literature targets them. The only route below the fixed-fraction floor, if the floor applies (Section 7), is an adaptive decision about how many medium items to pair, aimed at exactly these configurations; designing and analysing such a mechanism is an open problem. The one Extreme Harmonic idea that may still nibble at the multismall plateau is the mirror-threshold rule ($1/2 - x$ for every medium threshold $x$), which we did not pursue.

\section{Related work, prior art, and two corrections}

\emph{Lower bounds.} Yao \cite{r7} gave $3/2$; Brown and Liang \cite{r8,r9} $1.5363$; van Vliet \cite{r10} $1.5401$ by an LP over Sylvester-sequence inputs; Balogh, Békési and Galambos \cite{r11} $248/161\approx 1.5403$ with a non-Sylvester sequence; and \cite{r1} $1.54278$ with branching and a fully adaptive batch, reusing the separation theorem of \cite{r12}. As noted in \cite{r13}, the 2018 bound holds only against deterministic algorithms and the best bound valid against randomized algorithms remains $1.5403$; Fekete et al. \cite{r14} show that eleven parallel copies of an algorithm achieve $1.5406 < 1.54278$, consistent with the adaptive part being what a single deterministic algorithm cannot evade.

\emph{Prior statements about optimality of a construction.} We searched for any prior statement or proof that a lower-bound construction for online bin packing is optimal within its method, or that extending the adversary tree cannot improve it, and found none across the papers above, the follow-ups of the same group, the surveys, and the works citing \cite{r1} and \cite{r2} through 2026. The nearest statement is \cite[Remark 10]{r1}: ``our choice of $w$ and $r$ are optimal in the sense that the lower bound of Lemma 8 cannot be used to prove a higher lower bound on $R$ using other values of $w$ for the formula which we obtained'', together with the remark in their Section 4 that the tightness of the analysis of this construction was verified by a mathematical program for small $t$. Both concern the analysis of the fixed construction, not the construction among extensions; the paper never discusses whether more batches, branching or adaptivity could help. For online bin \emph{stretching}, Lhomme, Catusse and Brauner \cite{r15} prove that restricting the adversary to item sizes of granularity $1/g$ yields computable lower bounds converging to the true game value, the general principle behind ``a restricted class has a computable optimum''; it has not been applied to classic bin packing, and the computer-search lower bounds of Böhm and Simon \cite{r16} and Lieskovský \cite{r17} are likewise for bin stretching only.

\emph{Upper bounds and framework floors.} The harmonic lineage runs from Lee and Lee \cite{r18} through Ramanan et al. \cite{r19}, Seiden's Super Harmonic framework and Harmonic++ ($1.58889$) \cite{r20}, Heydrich and van Stee's Extreme Harmonic \cite{r3}, and Advanced Harmonic \cite{r2}. Ramanan et al. prove that no Super Harmonic algorithm beats $19/12$; Heydrich and van Stee prove a floor for algorithms that colour a fixed fraction of each type red and decide by type. \textbf{Correction 1.} That floor is $1.5762$ in the full version of their paper (arXiv:1511.00876v7, Section 8: inputs of sizes $1/2+\varepsilon$, $1/3+\varepsilon$, $1/4+\varepsilon$, $1/7+\varepsilon$; the case of two red type-4 items per bin gives $1.5762$, the other cases $1.5788$ and $100/63$); the value $1.5766$, which is often cited, appears only in the ICALP extended abstract. \textbf{Correction 2.} Whether this floor formally covers Advanced Harmonic, which uses several container types per class with fixed proportions rather than a single red fraction, is not stated anywhere we could find; the Advanced Harmonic paper declares no floor for its own framework and, in its Appendix D, questions the $1.5815$ analysis of Son of Harmonic. It is plausible that the floor applies, since the proportions are fixed fractions, and Section 3.4 shows empirically that no further tuning gain is reachable from the tuned table, but the formal statement is open. Epstein's recent tight analysis of Harmonic$_M$ \cite{r21} concerns a single algorithm rather than a framework floor.

\emph{Has the gap moved since 2018?} Across the works citing \cite{r1} and \cite{r2} through 2026, none improves either bound; recent papers continue to cite $1.54278$ and $1.57829$ as best known.

\section{Open problems}

\begin{enumerate}
\item \textbf{Full-class optimality of the 2018 construction.} Corollary 4.4 reduces it to extensions containing window items; a proof would exhibit a single algorithm-side strategy, equivalently a dual weight function, that handles every adversary in the class, i.e. an upper-bound proof restricted to the class. All computational evidence points to the construction being optimal.
\item \textbf{Window compositions.} Two or more window batches, or a window batch together with a new statistic, are covered only computationally.
\item \textbf{A complement design for a non-opener condition.} Section 5.4 shows that the paper's complements are specific to the opener condition. Whether some other legal condition admits its own complement design that beats $r^*_\infty$ is open; the separation theorem of \cite{r1} already licenses it.
\item \textbf{Does the fixed-fraction floor cover Advanced Harmonic?} If it does, $1.5762$ is the ceiling of all proportion tuning; if not, the ceiling is unknown.
\item \textbf{A mechanism for two-medium bins.} The bottleneck of Advanced Harmonic is unaddressed by any known technique; adaptive pairing decisions are the natural candidate.
\item \textbf{Finer scales and more statistics.} Adaptivity at the $1/43$ and $1/49$ scales, and three or more statistics, are beyond the explicit-pattern machine and would need column generation; the scale-mismatch mechanism identified for split adaptive batches suggests no gain, but this is not proved.
\end{enumerate}

\section*{Acknowledgements}

The code, the experiments, the accompanying documents and the text of this paper were developed with the assistance of Claude, an AI system made by Anthropic, used through Claude Code. The assistant wrote and refactored code, ran and monitored experiments, drafted proofs and prose, and carried out the literature searches, all under the author's direction and review. Every mathematical claim was checked by the independent verifiers and the reproduction harness described in Appendix B, and the author takes full responsibility for the content. The assistant is not an author.

\appendix
\section{Exact values}

{\small\noindent \textbf{Upper bound.} Published certificate \cite{r2}: \digits{10\dgrp{}060\dgrp{}574\dgrp{}276\dgrp{}093\dgrp{}395\dgrp{}247\dslash{}6\dgrp{}374\dgrp{}352\dgrp{}691\dgrp{}333\dgrp{}693\dgrp{}440} $= 1.5782895555\ldots$. Theorem 3.1: \digits{22\dgrp{}981\dgrp{}899\dgrp{}117\dgrp{}912\dgrp{}049\dgrp{}614\dgrp{}283\dslash{}14\dgrp{}561\dgrp{}271\dgrp{}504\dgrp{}392\dgrp{}687\dgrp{}714\dgrp{}304} $= 1.5782893074\ldots$. Eleven shifts: \digits{396\dgrp{}506\dgrp{}895\dgrp{}178\dgrp{}752\dgrp{}730\dgrp{}140\dgrp{}119\dgrp{}789\dgrp{}995\dgrp{}638\dgrp{}512\dgrp{}460\dgrp{}459\dgrp{}337\dgrp{}309\dslash{}251\dgrp{}236\dgrp{}666\dgrp{}738\dgrp{}523\dgrp{}006\dgrp{}506\dgrp{}532\dgrp{}046\dgrp{}106\dgrp{}587\dgrp{}389\dgrp{}752\dgrp{}053\dgrp{}334\dgrp{}016} $= 1.5782206488\ldots$. Theorem 3.2 (sixteen shifts): \digits{4\dgrp{}352\dgrp{}511\dgrp{}675\dgrp{}791\dgrp{}652\dgrp{}445\dgrp{}491\dgrp{}291\dgrp{}971\dgrp{}161\dgrp{}065\dgrp{}859\dgrp{}763\dgrp{}057\dgrp{}971\dgrp{}646\dgrp{}201\dgrp{}076\dgrp{}221\dgrp{}224\dgrp{}375\dgrp{}333\dgrp{}444\dgrp{}301\dslash{}2\dgrp{}757\dgrp{}870\dgrp{}367\dgrp{}885\dgrp{}430\dgrp{}623\dgrp{}246\dgrp{}395\dgrp{}267\dgrp{}379\dgrp{}223\dgrp{}653\dgrp{}471\dgrp{}665\dgrp{}545\dgrp{}153\dgrp{}623\dgrp{}399\dgrp{}543\dgrp{}807\dgrp{}051\dgrp{}482\dgrp{}071\dgrp{}748} $= 1.5782147437\ldots$.\par}

{\small\noindent \textbf{Lower bound.} $r^*_\infty = (1363-\sqrt{1387369})/120 = 1.5427809064729\ldots$; $r^*(3) = 1.5427541761586\ldots$; $r^*(4) = 1.5427803609\ldots$. Machine float value of $T^*_3$: $1.5427541456$ (float bisection bias about $-3\times 10^{-8}$; every claimed value was re-checked exactly). Window: $(1/2, 1/r^*_\infty] = (0.5, 0.648193\ldots]$; $1-1/r^*_\infty = 0.351806\ldots$; $1/(2r^*_\infty) = 0.324096\ldots$.\par}

\section{Reproducibility}

{\small\noindent All code is pure-standard-library Python with exact rational arithmetic, run from the repository root as modules. Library: \code{binpacking/exact/} (infinitesimal sizes, exact simplex, maximal pattern enumeration), \code{binpacking/lower_bound/} (adversary trees and the behaviour LP in \code{adversary_tree.py}, the independent certificate checker in \code{certificate_check.py}, structural search in \code{continuation_search.py}, window certification in \code{leaf_extension.py} and \code{interval_certificate.py}, the multi-batch engine in \code{multi_batch_search.py}, adaptive conditions in \code{predicate_tree.py}), \code{binpacking/upper_bound/} (class boundaries and proportions in \code{advanced_harmonic_tables.py}, scenarios, weights, exact knapsacks, cutting-plane optimization and the feasibility screen in \code{advanced_harmonic.py}, the independent verifier in \code{certificate_check.py}, proportion states and watch lists in \code{alpha_state.py}, the marking transplant in \code{marking.py}). Runnable experiments live in \code{experiments/lower_bound/} and \code{experiments/upper_bound/}; the referee sweep is \code{python3} \code{-m} \code{experiments.upper_bound.certify_advanced_harmonic} \code{--state} \code{alpha-record.pkl} and its verification is \code{python3} \code{-m} \code{experiments.upper_bound.verify_certificates} \code{--tag} \code{e} \code{--state} \code{alpha-record.pkl}. Proportion states are \code{data/states/alpha-descent-d.pkl} (11 shifts) and \code{data/states/alpha-record.pkl} (16 shifts, Theorem 3.2); certificates are in \code{data/certificates/}. The complete experiment log is \code{docs/results-log.md}; the proof of Theorem 4.2 with full bookkeeping is \code{docs/theorem-multi-batch-extension.md}; the literature check is \code{docs/literature-survey-3-prior-art.md}. The repository is at \url{https://github}.com/ksnmartin/online-bin-packing-bounds, release tag v1.0; its README describes the layout and how to reproduce every number. The code is released under the Apache License 2.0 and this paper and the accompanying documents under CC BY 4.0.\par}

\end{document}